\documentclass{article}



\PassOptionsToPackage{numbers}{natbib}
\usepackage[final,sglblindworkshop]{neurips_2025}
\workshoptitle{First Workshop on LLM Persona Modeling}


\usepackage[utf8]{inputenc} 
\usepackage[T1]{fontenc}    
\usepackage{hyperref}       
\usepackage{url}            
\usepackage{booktabs}       
\usepackage{amsfonts}       
\usepackage{nicefrac}       
\usepackage{microtype}      
\usepackage{xcolor}         
\usepackage{graphicx}       
\usepackage{amsmath}        
\usepackage{amssymb}        
\usepackage{amsthm}         
\usepackage[most]{tcolorbox} 

\newtheorem{theorem}{Theorem}
\newtheorem{corollary}{Corollary}

\newtheorem{hypothesis}{Hypothesis}
\newtheorem*{theorem*}{Theorem}

\title{Love First, Know Later: Persona-Based Romantic Compatibility Through LLM Text World Engines}

%


\author{%
  \textbf{HaoYang Shang}$^{1}$\thanks{Equal contribution.} 
  \And
  \textbf{Zhengyang Yan}$^{1}$\footnotemark[1] 
  \And
  \textbf{Xuan Liu}$^{1}$\thanks{Corresponding author: \texttt{info.breathingcore@gmail.com}}
  \AND 
  \normalfont $^{1}$BreathingCORE \\ 
}

\begin{document}

\maketitle

\begin{abstract}
We propose \textbf{\textit{Love First, Know Later}}: a paradigm shift in computational matching that simulates interactions first, then assesses compatibility. Instead of comparing static profiles, our framework leverages LLMs as \textbf{\textit{text world engines}} that operate in dual capacity—as persona-driven agents following behavioral policies and as the environment modeling interaction dynamics. We formalize compatibility assessment as a reward-modeling problem: given observed matching outcomes, we learn to extract signals from simulations that predict human preferences. Our key insight is that relationships hinge on responses to critical moments—we translate this observation from relationship psychology into mathematical hypotheses, enabling effective simulation. Theoretically, we prove that as LLM policies better approximate human behavior, the induced matching converges to optimal stable matching. Empirically, we validate on speed dating data for initial chemistry and divorce prediction for long-term stability. This paradigm enables interactive, personalized matching systems where users iteratively refine their agents, unlocking future possibilities for transparent and interactive compatibility assessment.
\end{abstract}
\section{Introduction}

Modern dating platforms primarily rely on profile similarity metrics, yet decades of relationship research show that compatibility emerges from interaction dynamics, not static attributes \cite{finkel2012online}. We propose \textbf{Love First, Know Later}: instead of comparing static profiles to predict compatibility, we simulate the relationship itself first.

Recent advances show that LLMs can express consistent personality traits \cite{park2023generative,li2024big5chat,wang2025incharacter}, follow human psychological patterns \cite{kosinski2023theory}, and engage in social interactions \cite{yang2024oasis,zhou2024sotopia,liu2024roleagent}. These models encode rich social priors from their training data, enabling them to simulate human behaviors and emotional responses. We leverage these capabilities to use LLMs as proxies for simulating romantic interactions. Building on the finding that LLMs possess extensive world knowledge and can maintain coherent personas across extended dialogues, we propose the concept of an \textbf{LLM text world engine}—a system where an LLM operates in dual capacity: (1) as \emph{agents} following behavioral policies $\pi_i, \pi_j$ derived from user personas and (2) as the \emph{environment} simulating conversation dynamics and state transitions (Figure \ref{fig:system_overview}). This dual-capacity design benefits from recent advances in LLM-based social simulation \cite{yang2024oasis,chang2025llms,liu2024training}, role-playing agent consistency \cite{liu2024roleagent,xu2025persona}, multi-agent coordination \cite{shang2025united}, social alignment and cognition \cite{liu2024training,liu2025synthetic,liuexploring}, and controllable agent behaviors \cite{liu2025programmable}, creating coherent simulation spaces for social behaviors.

\begin{figure}[h]
\centering
\includegraphics[width=0.8\textwidth]{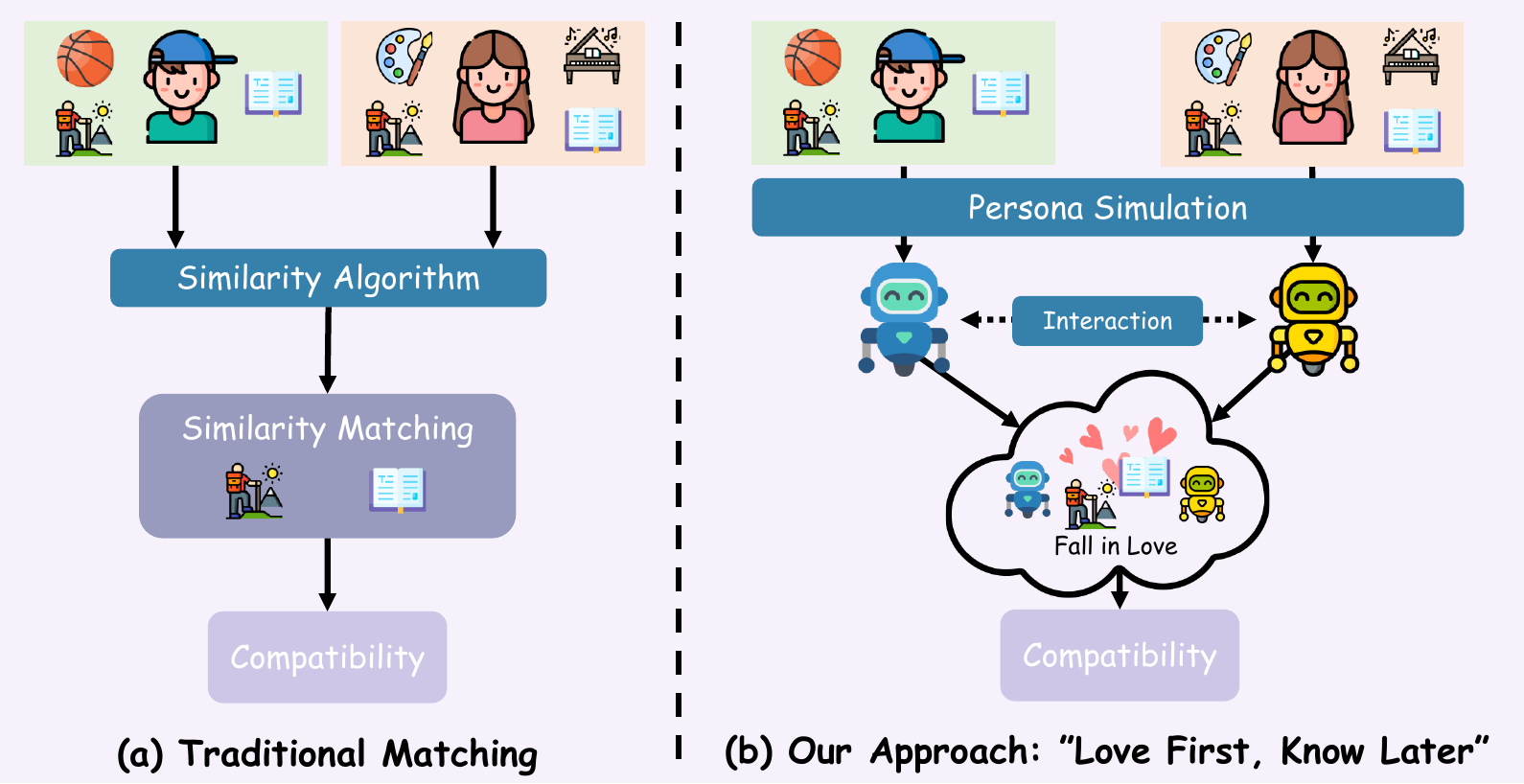}
\caption{\textbf{System Overview}: Traditional matching (left) compares static profiles to compute similarity. Our approach (right) uses LLM text world engines to simulate relationship dynamics, with the LLM operating in dual mode—as persona-driven agents and as the environment modeling interaction flow.}
\label{fig:system_overview}
\end{figure}

We formalize compatibility assessment as a \textbf{reward modeling problem} with an inverse RL flavor \cite{ng2000algorithms,christiano2017deep}, and connect it to preference learning problem \cite{rafailov2023direct,ramesh2024group}: given observed matching outcomes and fixed LLM policies approximating human behavior, we learn to extract signals from simulated interactions that predict real human preferences. This paradigm shift—from profile comparison to interaction simulation—enables us to capture the emergent properties of relationships that static features cannot represent. Theoretically, we prove that as LLM policies better approximate human behavior, compatibility predictions converge to optimal matching. Empirically, we initially validate our approach on speed dating and divorce prediction datasets. 

\paragraph{Contributions.} Our work makes three key contributions: \textbf{(1) Paradigm and Architecture:} We propose the \emph{Love First, Know Later} paradigm—simulating interactions to assess compatibility—and introduce \emph{LLM text world engines} that operate in dual capacity as both agents and environment. \textbf{(2) Mathematical Formalization with Theoretical Guarantees:} We formalize compatibility as a reward modeling problem and translate observations from relationship psychology and social science into mathematical hypotheses, bridging qualitative social insights with quantitative frameworks. Under these hypotheses, we prove convergence of our methods to optimal matching outcome. \textbf{(3) Empirical Validation and Analysis:} We validate feasibility through experiments on speed dating and divorce prediction, and we provide an in-depth analysis of the potential future of our paradigm.

\section{Method: Reward Modeling for Compatibility}

\subsection{Problem Formulation}

We formalize romantic compatibility prediction as a \textbf{reward modeling problem} \cite{christiano2017deep}. Given a dataset $\mathcal{D} = \{(\pi_i, \pi_j, y_{ij})\}$ where $\pi_i, \pi_j$ are individuals' policies and $y_{ij} \in \{0,1\}$ indicates mutual matching, our goal is to learn a reward function $R: \Pi \times \Pi \rightarrow \mathbb{R}$ predicting human preferences (see Appendix \ref{app:mdp} for MDP formulation). This builds on inverse reinforcement learning \cite{ng2000algorithms} and reward modeling from human feedback \cite{christiano2017deep}. One issue is we don't have human policies $\pi_i^*$ and $\pi_j^*$. We approximate them using LLMs: $\hat{\pi}_i(a|s) = \text{LLM}(a|s, P_i)$ based on persona $P_i$, then extract features from simulation to learn human preferences.

\subsection{LLM Text World Engine for Interaction Simulation}

For each pair of individuals $(i,j)$, we deploy an LLM text world engine that operates in dual capacity. First, it simulates human behavioral policies: $\hat{\pi}_i(a|s) = \text{LLM}(a|s, P_i)$ and $\hat{\pi}_j(a|s) = \text{LLM}(a|s, P_j)$ based on personas $P_i$ and $P_j$. Second, it simulates the environment by generating interaction topics, modeling emotional state transitions, and determining how the dialogue context evolves based on participants' responses. With both agents and environment simulated, we obtain a complete interaction trajectory $\tau_{ij} = \{(s_0, a_i^0, a_j^0), ..., (s_T, a_i^T, a_j^T)\}$ where states capture the evolving context and actions represent agents response. The quality of our compatibility prediction critically depends on how well $\hat{\pi}_i$ approximates the true human policy $\pi_i^*$.

\subsection{Learning Human Preferences via Love Observer}

We propose the idea of \emph{love observer}—a specialized LLM that extracts ratings from simulated interactions: \textbf{Individual Participant Ratings.} After interaction $\tau_{ij}$, the observer assesses each participant's perspective independently, producing $r_1 = \text{Observer}(\tau_{ij}, p_i)$ and $r_2 = \text{Observer}(\tau_{ij}, p_j)$. \textbf{Observer Rating.} The observer also provides an external assessment $r_3 = \text{Observer}(\tau_{ij}, p_{\text{ext}})$ evaluating overall compatibility based on conversational flow, mutual engagement, and value alignment. We learn a compatibility score $R(i,j)$ by combining these ratings using a simple linear method trained on observed matching decisions. This approach is inspired by recent work on calibrating LLM judgments and learning from preference data \cite{shao2025earlier,slocum2025diverse}.

\section{Efficient Simulation via Critical Events}

Full relationship simulation is computationally intractable—years of interactions are impossible. We use two insights from relationship psychology that enable effective compatibility assessment.

\begin{hypothesis}[Sparse Rewards]
Relationship outcomes are determined by responses to a small number of critical moments. Formally, $R(s) \neq 0$ only for $s \in \mathcal{S}_{\text{critical}}$ where $|\mathcal{S}_{\text{critical}}| \ll |\mathcal{S}|$.
\end{hypothesis}

\begin{hypothesis}[Deterministic Decisions]
In critical moments, individuals exhibit consistent decision patterns. The policy entropy $H(\pi(\cdot|s)) < \delta$ for small $\delta$ when $s \in \mathcal{S}_{\text{critical}}$.
\end{hypothesis}

These hypotheses are grounded in relationship research showing that critical moments—conflict resolution \cite{gottman1998predicting}, first-date impressions \cite{finkel2012online}, value-alignment discussions \cite{ban2024neural}—strongly predict outcomes, while routine interactions contribute minimally (Hypothesis 1). Moreover, trait activation theory \cite{tett2003personality} and situational strength research \cite{meyer2010situational} show individuals exhibit consistent behaviors in trait-relevant situations (Hypothesis 2). This motivates our two simulation modes. \textbf{Speed Dating Mode} tests whether LLM policies can learn human preferences for initial chemistry through brief dialogues. \textbf{Critical Events Mode} validates our sparse rewards hypothesis by probing fundamental compatibility through several pivotal scenarios (career conflicts, family planning)—these rare but deterministic moments reveal true compatibility. We give a theorem ensuring that improving LLM agents directly translates to better matching outcomes—justifying investment in agent refinement.

\begin{theorem}[Convergence Guarantee]\label{thm:convergence_full}
Under above hypotheses, as the LLM policy approximation error $\epsilon \to 0$: (1) the prediction error $|\hat{R}(i,j) - R^*(i,j)|$ between predicted $\hat{R}$ and true reward $R^*$ vanishes, and (2) the induced matching converges to optimal stable matching (proof in Appendix \ref{app:theorem1}).
\end{theorem}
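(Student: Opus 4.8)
The plan is to decompose the statement into three linked estimates — a stability bound on the law of the simulated trajectory in terms of the policy error, a Lipschitz bound converting that into a bound on the learned reward, and a discrete stability argument promoting the reward bound to exact agreement of the induced matchings — and then chain them. First I would fix notation: let $\epsilon := \sup_i \sup_{s \in \mathcal{S}_{\text{critical}}} \mathrm{TV}\big(\hat\pi_i(\cdot\mid s),\, \pi_i^*(\cdot\mid s)\big)$, which by Hypothesis 1 is the only approximation error that matters. I treat the love observer together with the learned linear head as a fixed, bounded, measurable functional $\Phi$ of the interaction trajectory — the ratings $r_1,r_2,r_3$ lie in a bounded interval — so that $\hat R(i,j) = \mathbb{E}_{\hat\tau}[\Phi(\hat\tau)]$ and $R^*(i,j) = \mathbb{E}_{\tau^*}[\Phi(\tau^*)]$, with $\tau^*$ generated by the true policies against the same (LLM-simulated) environment kernel and the horizon $T$ finite in both simulation modes; by Hypothesis 1, $\Phi(\tau)$ depends on $\tau$ only through the subsequence of (critical state, action-pair) it visits.

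For part (1) I would combine two ingredients. The first is an entropy-to-concentration step: Hypothesis 2, $H(\pi(\cdot\mid s)) < \delta$ on $\mathcal{S}_{\text{critical}}$, forces a dominant action $a^\star(s)$ with $\pi(a^\star(s)\mid s) \ge 1 - \eta(\delta)$ for an explicit $\eta$ with $\eta(\delta)\to 0$ as $\delta \to 0$ (an elementary entropy / reverse-Pinsker estimate), and with $\epsilon$-closeness this makes $\hat\pi_i$ and $\pi_i^*$ choose the same critical-moment action with probability at least $1-\eta(\delta)-\epsilon$. The second is a telescoping coupling of the two trajectory processes: non-critical steps contribute nothing to $\Phi$, and the $N := |\mathcal{S}_{\text{critical}}|$ critical steps match with high probability, so the two laws, read off on the critical subsequence, lie within $N(\eta(\delta)+\epsilon)$ in total variation, plus a ``path term'' accounting for the possibility that the policy error reroutes the dialogue toward a different critical state. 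Boundedness of $\Phi$ then gives $|\hat R(i,j) - R^*(i,j)| \le \|\Phi\|_\infty\, N(\eta(\delta)+\epsilon) + (\text{path term})$. The clean way to discharge the path term — which I would adopt, invoking the design of Critical Events Mode — is that the pivotal scenarios are injected by the environment, so which critical states occur is not policy-dependent; the bound then reads $\|\Phi\|_\infty N(\eta(\delta)+\epsilon)$, tending to $\|\Phi\|_\infty N\eta(\delta)$ as $\epsilon\to 0$ and vanishing in the idealized regime $\delta\to 0$.

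For part (2) I would use that the induced matching is the (say, men-optimal) Gale--Shapley stable matching for the ordinal profile $\succeq^{\hat R}$, in which $i$ ranks $j$ above $j'$ iff $\hat R(i,j) > \hat R(i,j')$, while the optimal stable matching is the same construction applied to $\succeq^{R^*}$. Under a genericity assumption — no ties in the true reward, i.e.\ a strict gap $\gamma := \min_{i,\, j \ne j'} |R^*(i,j) - R^*(i,j')| > 0$ — part (1) gives, for every $\epsilon$ small enough that $2\|\Phi\|_\infty N(\eta(\delta)+\epsilon) < \gamma$, that $\succeq^{\hat R} = \succeq^{R^*}$ exactly; hence the two sets of stable matchings coincide and Gale--Shapley returns the identical matching. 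So convergence holds in the sharp discrete sense: the induced matching equals the optimal stable matching for all sufficiently small $\epsilon$.

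I expect the main obstacle to be the path term in part (1): absent the environment-injected structure, a small policy error at a non-critical turn can steer the conversation into a different critical state — a butterfly effect — and controlling this requires either the finite-horizon Lipschitz-in-TV bound above together with a bound on how many critical states are reachable, or an added contraction/stability assumption on the environment kernel. Secondary points I would handle with care are making $\eta(\delta)$ explicit and being honest that the error is $O(\epsilon) + O(\eta(\delta))$ — the theorem's ``$\epsilon \to 0$'' phrasing presumes Hypothesis 2 in its idealized small-$\delta$ form — and stating the no-ties genericity condition of part (2) as an explicit mild regularity assumption.
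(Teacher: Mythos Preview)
Your proposal is correct and follows essentially the same architecture as the paper's proof: a Lipschitz/boundedness argument on the observer converts trajectory-level closeness into a reward error bound that decomposes as $O(|\mathcal{S}_{\text{critical}}|\cdot\epsilon) + O(\delta)$, and Part~2 uses Gale--Shapley together with a no-ties genericity condition to promote uniform reward closeness to exact agreement of the induced stable matchings. Your treatment is in fact more careful than the paper's---you make the entropy-to-concentration step explicit via $\eta(\delta)$, you name and discharge the path-term/butterfly-effect issue through the environment-injected-scenarios structure of Critical Events Mode, and you state the genericity gap $\gamma$ explicitly---whereas the paper simply asserts the trajectory-distance decomposition $d(\tau_{ij},\tau^*_{ij}) \le C_\epsilon|\mathcal{S}_{\text{critical}}|\epsilon + C_\delta\delta$ and the preference-order preservation without these refinements.
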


\section{Experimental Setup and Results}

We evaluate our approach on two relationship datasets capturing different compatibility aspects. The Columbia Speed Dating dataset tests initial chemistry prediction. The Divorce Prediction dataset validates our critical events hypothesis for long-term stability. To ensure fair comparison, all baselines underwent parameter sweeping to ensure optimal performance. Full details are in \ref{app:experimental_details}.

\begin{table}[h]
\centering
\begin{minipage}{0.48\textwidth}
\centering
\begin{tabular}{lcc}
\toprule
Method & F1 & AUC \\
\midrule
Logistic Regression & 0.66 & 0.61 \\
Similarity & 0.55 & 0.54 \\
\midrule
LLM Love Observer & 0.67 & 0.60 \\
LLM Mixed & 0.64 & 0.57 \\
\bottomrule
\end{tabular}
\end{minipage}
\hfill
\begin{minipage}{0.48\textwidth}
\centering
\begin{tabular}{lcc}
\toprule
Method & F1 & AUC \\
\midrule
Logistic Regression & 0.61 & 0.60 \\
Similarity & 0.65 & 0.50 \\
\midrule
LLM Love Observer & 0.67 & 0.56 \\
LLM Mixed & 0.67 & 0.57 \\
\bottomrule
\end{tabular}
\end{minipage}
\caption{Match prediction performance. \textbf{Left:} Stage 1 uses pre-dating information only. \textbf{Right:} Stage 2 adds during-date signals. LLM Mixed combines multiple LLM ratings using learned weights.}
\label{tab:rq1_results}
\end{table}

\paragraph{RQ1: Does Speed Dating Simulation Capture Initial Human Chemistry?}

Table~\ref{tab:rq1_results} compares our approach against logistic regression and similarity baselines. The LLM methods exceeds baselines. Performance is modest overall due to dataset sparsity, but simulated interactions extract meaningful signal.

\paragraph{RQ2: Do Critical Events Reveal Long-term Compatibility?}

We validate our critical events hypothesis on the Divorce Prediction dataset by generating personalized critical scenarios and simulating reactions. Unlike speed dating (limited information), this task has abundant correlated features (> $0.9$ correlation), making logistic regression very strong. Despite this challenging comparison, our observer method (10 ICL examples) performs comparably without personalized tuning, validating the framework's potential across relationship stages.

\begin{table}[h]
\centering
\begin{tabular}{lcc}
\toprule
Method & F1 & AUC \\
\midrule
Logistic Regression  & 0.95 & 1.00 \\
\midrule
LLM Love Observer & 0.90 & 0.92 \\
\bottomrule
\end{tabular}
\caption{Divorce Prediction Results: Critical Events Simulation vs Baseline}
\label{tab:divorce}
\end{table}

\section{Discussion}

We presented a paradigm shift in compatibility assessment: using LLM text world engines to simulate interactions, predicting outcomes. Our approach will capture the emergent dynamics of relationships that static features cannot represent. Key contributions include: (1) formalizing compatibility as a reward modeling problem, (2) introducing and justifying the hypothesis for effective simulation via critical events, and (3) Initial validation showing validity even without fine-tuning of base model. While our experiments demonstrate feasibility, the true significance lies in the future potential. 

\textbf{Revolutionary potential: From static to interactive matching.} Our framework reimagines matchmaking through four capabilities:

(i) \textbf{Personalized evolution and continuous improvement}: Each user develops their own agent that improves through feedback, creating personalized compatibility predictors. Users can refine agents through preference optimization techniques—ranging from training-based methods like DPO \cite{rafailov2023direct} and its variants (\cite{ramesh2024group}, \cite{shao2025earlier},\cite{slocum2025diverse}) to training-free approaches like In-Context DPO \cite{song-etal-2025-instantly} that operate purely through prompts (one individual agent per person). As users provide more feedback and as foundation models advance \cite{liuexploring,liu2025programmable}, predictions automatically improve, converging toward optimal matching.

(ii) \textbf{Bidirectional interaction and transparency}: Unlike black-box systems, users actively participate in the matching process through bidirectional interaction. They observe simulated interactions between their agent and potential matches, gaining transparent insights into compatibility assessments. Users provide feedback on agent behaviors and simulation quality, which refines both the agent policy and observer judgments. This continuous dialogue between user and system \cite{park2023generative,liu2024roleagent} ensures persona authenticity \cite{xu2025persona} while transforming matching from passive algorithm acceptance to active, collaborative exploration.

(iii) \textbf{Active preference exploration}: Agents probe hidden compatibility aspects through new scenarios, helping users discover what matters through simulated experiences rather than questionnaires. This draws on active learning \cite{ban2024neural} and exploration principles \cite{sutton2018reinforcement}. Unlike traditional methods where profiles are fixed and cannot reveal personal preferences on unasked dimensions, simulation enables dynamic exploration of unique relationship scenarios and concerns.

Despite this potential, limitations remain: (i) text-only simulations miss non-verbal cues but can be mitigated by using novel-like gesture/expression/tone description; (ii) our current dyadic modeling focuses on pairwise compatibility—extending to multi-person social dynamics (e.g., group dating, family integration, friend network compatibility) requires game-theoretic frameworks beyond simple pairwise matching. As LLMs advance in persona consistency \cite{xu2025persona,liu2024roleagent} and social reasoning \cite{liu2024training}, we expect this \textbf{Love First, Know Later} approach to unlock new possibilities wherever compatibility emerges through interaction.

\bibliographystyle{plainnat}
\bibliography{references}

\appendix

\section{Related Work}
\label{app:related_work}

\citet{wang2025incharacter} and \citet{li2024big5chat} demonstrate that LLMs can maintain consistent personality traits with high fidelity—validating our core assumption for romantic compatibility simulation. \citet{liu2025synthetic} show how persona dimensions affect social dynamics in multi-agent interactions, while \citet{zhou2024sotopia} provide evaluation frameworks for social intelligence. \citet{xiao2025humanizing} propose intentional anthropomorphic design in LLMs, supporting our vision of transparent user-agent interaction. \citet{occhipinti2024prodigy} introduce profile-based dialogue generation, complementing our persona-to-narrative transformation approach. We extend this foundation to romantic compatibility assessment with the critical events hypothesis and theoretical convergence guarantees.

\section{Proof of Theorem 1}
\label{app:theorem1}

\begin{theorem*}[Full Statement: Convergence to Optimal Matching]
Under the sparse rewards and deterministic decisions hypotheses, let $\epsilon = \max_{s,a} |\hat{\pi}_i(a|s) - \pi_i^*(a|s)|$ measure the policy approximation error and $\delta$ bound the policy entropy at critical states. Then:
\begin{enumerate}
\item The prediction error is bounded: $|\hat{R}(i,j) - R^*(i,j)| \leq L_\epsilon \epsilon + L_\delta \delta$ where $L_\epsilon, L_\delta$ depend on the number of critical states and observer Lipschitz constant.
\item As $(\epsilon, \delta) \to (0,0)$, the stable matching $\hat{\mathcal{M}}$ induced by predicted rewards converges to the optimal matching $\mathcal{M}^*$ under true rewards.
\end{enumerate}
\end{theorem*}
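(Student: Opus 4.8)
The plan is to prove the two claims in sequence, since the second follows from the first by a stability-perturbation argument.

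\textbf{Step 1: Bounding the prediction error.} First I would decompose the reward into contributions from individual states, using the sparse-rewards hypothesis to restrict the sum to $s \in \mathcal{S}_{\text{critical}}$: writing $\hat{R}(i,j) = \sum_{s \in \mathcal{S}_{\text{critical}}} \rho_{\hat\pi}(s)\, \hat R(s)$ where $\rho_{\hat\pi}$ is the visitation measure induced by the simulated policies, and similarly for $R^*(i,j)$ with $\rho_{\pi^*}$. Then I would split the difference into two pieces: (a) the error from the observer evaluating reactions under $\hat\pi$ rather than $\pi^*$ at a fixed critical state, and (b) the error from visiting critical states with the wrong frequency. For (a), I would invoke the deterministic-decisions hypothesis: since $H(\pi^*(\cdot|s)) < \delta$, the true policy is nearly a point mass at some action $a^*_s$, so with probability at least $1 - O(\delta)$ (via Fano / the fact that low-entropy distributions concentrate) the simulated action matches, and the Lipschitz constant $L_{\text{obs}}$ of the observer controls the residual; this yields a bound of the form $L_\delta \delta$. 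For (b), a standard trajectory-coupling argument bounds $\|\rho_{\hat\pi} - \rho_{\pi^*}\|_1 \leq c\, T\, \epsilon$ over a horizon $T$, and since rewards are bounded this contributes $L_\epsilon \epsilon$. Summing over the $|\mathcal{S}_{\text{critical}}|$ critical states and absorbing constants gives claim (1): $|\hat R(i,j) - R^*(i,j)| \leq L_\epsilon \epsilon + L_\delta \delta$.

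\textbf{Step 2: From reward convergence to matching convergence.} For claim (2), I would treat stable matching as the solution of a combinatorial optimization over the finite set of matchings, parametrized by the pairwise reward matrix. The optimal stable matching $\mathcal{M}^*$ is determined by the finitely many pairwise comparisons among rewards $R^*(i,j)$; define the \emph{stability gap} $\gamma = \min$ over all pairs $(i,j)$ of the margin by which the relevant preference inequalities are satisfied at $\mathcal{M}^*$. Assuming $\gamma > 0$ (genericity of the true preferences — a non-degeneracy condition I would state explicitly), once $L_\epsilon \epsilon + L_\delta \delta < \gamma/2$ every individual's induced preference ordering over partners coincides with the true ordering, so the Gale–Shapley outcome (or the optimal-stable-matching selection) is literally identical: $\hat{\mathcal M} = \mathcal M^*$. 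Hence as $(\epsilon,\delta)\to(0,0)$ the induced matching converges to — indeed eventually equals — the optimal stable matching.

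\textbf{Main obstacle.} The hard part will be Step 1(a): making rigorous the passage from ``low entropy'' to ``the simulated trajectory agrees with the true one at critical states with high probability.'' The $\ell_\infty$ bound $\epsilon$ on per-action policy discrepancy does not by itself force agreement on the \emph{realized} action, and errors compound across the (possibly many) critical states along a trajectory; I would need either an independence/Markov structure on critical states or a union bound that keeps $|\mathcal{S}_{\text{critical}}|\,(\epsilon+\delta)$ small, which is exactly where the sparsity $|\mathcal{S}_{\text{critical}}| \ll |\mathcal{S}|$ earns its keep. A secondary subtlety is justifying the observer's Lipschitz continuity over the trajectory space — this really requires specifying a metric on trajectories (e.g., edit distance weighted toward critical states) under which $\text{Observer}$ is assumed Lipschitz, which I would fold into the hypotheses rather than prove.
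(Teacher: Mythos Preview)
Your proposal is correct and matches the paper's overall strategy: bound $|\hat R - R^*|$ linearly in $\epsilon$ and $\delta$ via sparsity, low entropy, and observer Lipschitzness, then use a finite preference-gap argument to conclude the Gale--Shapley matching eventually equals $\mathcal{M}^*$. The organization of Part~1 differs somewhat: the paper chains the bound as $|\mathbf{w}^T(\mathbf{r}_{ij}-\mathbf{r}^*_{ij})| \le \|\mathbf{w}\|\,L_{\text{obs}}\, d(\tau_{ij},\tau^*_{ij})$ and then asserts the trajectory distance decomposes as $C_\epsilon |\mathcal{S}_{\text{critical}}|\,\epsilon + C_\delta\,\delta$, whereas you use an occupancy-measure split (visitation-frequency error vs.\ per-state observer error), which is a more standard RL-style decomposition arriving at the same place. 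Part~2 is essentially identical: the paper's ``preferences distinct with probability~1 for continuous rewards'' is exactly your genericity assumption $\gamma>0$. Your discussion of the main obstacles (compounding across critical states, and the need to \emph{assume} rather than prove observer Lipschitzness on a specified trajectory metric) is in fact more candid than the paper's own treatment, which leaves both points unargued.
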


\begin{proof}
\textbf{Part 1: Error Bound}

Under our sparse rewards and deterministic decisions hypotheses, we focus on critical states $\mathcal{S}_{\text{critical}}$. Since the observer LLM is Lipschitz continuous with respect to the input conversation, the difference between the ratings extracted from the simulated interaction ($\mathbf{r}_{ij} = [r_1, r_2, r_3]^T$) and the true ones ($\mathbf{r}^*_{ij}$) is bounded:
$$||\mathbf{r}_{ij} - \mathbf{r}^*_{ij}|| \leq L_{\text{obs}} \cdot d(\tau_{ij}, \tau^*_{ij})$$
where $L_{\text{obs}}$ is the Lipschitz constant of the observer and $d(\cdot, \cdot)$ is a distance metric between interaction trajectories.

By the sparse rewards hypothesis, the critical contribution to the reward comes from $|\mathcal{S}_{\text{critical}}| \ll |\mathcal{S}|$ states. The distance between the simulated and true interactions at these critical states decomposes into two terms: a policy-approximation term and a sampling/concentration term due to single-rollout estimation under low entropy:
$$d(\tau_{ij}, \tau^*_{ij}) \leq C_\epsilon \cdot |\mathcal{S}_{\text{critical}}| \cdot \epsilon \; + \; C_\delta \cdot \delta,$$
for constants $C_\epsilon, C_\delta > 0$. Combining these, the error in the final predicted reward is
$$|\hat{R}(i,j) - R^*(i,j)| = |\mathbf{w}^T(\mathbf{r}_{ij} - \mathbf{r}^*_{ij})| \leq ||\mathbf{w}|| \cdot ||\mathbf{r}_{ij} - \mathbf{r}^*_{ij}|| \leq L_\epsilon\, \epsilon + L_\delta\, \delta,$$
where $L_\epsilon = ||\mathbf{w}|| \cdot L_{\text{obs}} \cdot C_\epsilon \cdot |\mathcal{S}_{\text{critical}}|$ and $L_\delta = ||\mathbf{w}|| \cdot L_{\text{obs}} \cdot C_\delta$. Since $|\mathcal{S}_{\text{critical}}| \ll |\mathcal{S}|$ and $\delta$ is small by hypothesis, this yields a tight bound compatible with single-trajectory estimation.

\textbf{Part 2: Convergence to Optimal Matching}

We consider the Gale-Shapley algorithm for finding a stable matching: A matching is stable if there is no unmatched pair $(i,j)$ who both prefer each other over their currently assigned partners. Let $\mathcal{M}^*$ be the stable matching under the true rewards $R^*$ and $\hat{\mathcal{M}}$ be the stable matching under our predicted rewards $\hat{R}$.

For any two potential partners $k$ and $j$ for individual $i$, if $R^*(i,j) > R^*(i,k)$, then for a sufficiently small $\epsilon$, our reward prediction bound ensures that the preference ordering is preserved:
$$\hat{R}(i,j) > R^*(i,j) - L\epsilon > R^*(i,k) - L\epsilon > \hat{R}(i,k) - 2L\epsilon$$
As $\epsilon \to 0$, the term $2L\epsilon$ vanishes, meaning $\hat{R}(i,j) > \hat{R}(i,k)$. Therefore, the preference ordering of our predicted rewards converges to the true preference ordering. By the uniqueness of the stable matching solution when preferences are distinct (which holds with probability 1 for continuous reward functions), it follows that for a sufficiently small $\epsilon$, $\hat{\mathcal{M}} = \mathcal{M}^*$.
\end{proof}

\begin{corollary}[Online Learning Reduces Approximation Error]
\label{cor:online_learning}
The policy approximation error $\epsilon$ can be reduced through online learning via preference optimization methods such as DPO \cite{rafailov2023direct} and its variants. As users provide feedback on agent behaviors, $\epsilon_t \to 0$, ensuring convergence to optimal matching by Theorem \ref{thm:convergence_full}.
\end{corollary}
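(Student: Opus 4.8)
The plan is to treat this corollary as a composition of two ingredients: (a) an online-learning argument showing that DPO-style preference optimization, applied to a stream of user feedback, drives the per-round policy error $\epsilon_t$ to zero; and (b) Part~2 of Theorem~\ref{thm:convergence_full}, which already converts $\epsilon_t \to 0$ into $\hat{\mathcal{M}}^{(t)} = \mathcal{M}^*$ for all sufficiently large $t$. Since (b) is in hand, essentially all the work is in (a).

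First I would formalize the loop. At round $t$ the user's agent carries policy $\hat{\pi}_i^{(t)}(a \mid s) = \text{LLM}(a \mid s, P_i^{(t)})$, where $P_i^{(t)}$ is the tunable object (the prompt context for In-Context DPO, the weights for training-based DPO). Define $\epsilon_t = \max_{s \in \mathcal{S}_{\text{critical}},\, a} \lvert \hat{\pi}_i^{(t)}(a \mid s) - \pi_i^*(a \mid s) \rvert$, taking the maximum only over critical states since that is all Parts~1 and~2 of Theorem~\ref{thm:convergence_full} actually need. The user, shown the agent's simulated responses at re-sampled critical scenarios, supplies pairwise comparisons $a^+ \succ a^-$ that are consistent in expectation with $\pi_i^*$ under a Bradley--Terry link; the DPO update is then a stochastic step on the induced loss. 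I would invoke a standard stochastic-approximation / online-convex-optimization argument — bounded gradients, diminishing step sizes (or a fixed step with a regret-to-error conversion), plus realizability — to conclude $\mathbb{E}[\epsilon_t] \to 0$, with an $O(1/\sqrt{t})$ rate under the weakest assumptions and $O(1/t)$ under strong convexity.

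Two features of our setting make this step clean. By the sparse-rewards hypothesis there are only $\lvert \mathcal{S}_{\text{critical}} \rvert \ll \lvert \mathcal{S} \rvert$ states whose policy values must be pinned down, so the coverage requirement (each critical state queried infinitely often) is mild and the quantity being controlled is low-dimensional. By the deterministic-decisions hypothesis $H(\pi(\cdot \mid s)) < \delta$ at those states, so the target is nearly a point mass: a bounded number of comparisons per critical state identifies it, and the concentration term is negligible. In effect the task reduces to learning $\lvert \mathcal{S}_{\text{critical}} \rvert$ almost-deterministic decisions, which online preference learning handles. Having established $\epsilon_t \to 0$, choose $T$ with $\epsilon_t < \bar{\epsilon}$ for all $t \ge T$, where $\bar{\epsilon}$ is the threshold from Part~2 of Theorem~\ref{thm:convergence_full} below which all relevant preference orderings are preserved; then $\hat{\mathcal{M}}^{(t)} = \mathcal{M}^*$ for $t \ge T$, which is the stated convergence.

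The main obstacle is the realizability / identifiability assumption hiding inside (a): DPO's usual guarantees concern a KL-regularized objective and recover the best policy within the expressible class, not an arbitrary $\pi_i^*$, and certainly not in $\ell_\infty$ over state--action pairs. I would handle this in one of two ways: either assume $\pi_i^* \in \{ \text{LLM}(\cdot \mid \cdot, P) : P \}$ outright, which makes the corollary essentially a restatement of DPO convergence plus Theorem~\ref{thm:convergence_full}; or weaken the conclusion to $\epsilon_t \to \epsilon_\infty$, the irreducible approximation floor of the model class, and correspondingly assert $\hat{\mathcal{M}}^{(t)} \to \mathcal{M}^*$ only when $\epsilon_\infty < \bar{\epsilon}$. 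A secondary obstacle is non-stationarity — under active preference exploration the scenario-querying distribution is chosen adaptively, so I would need an adaptive-regret / adversarial version of the online-learning bound rather than the i.i.d. one.
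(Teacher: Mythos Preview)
Your proposal is substantially more rigorous than what the paper does. In the paper this corollary is stated without any proof: it is asserted as an immediate observation that preference-optimization methods (DPO and variants) can drive $\epsilon_t$ down, and then Theorem~\ref{thm:convergence_full} is invoked verbatim to obtain convergence to the optimal matching. There is no formalization of the online loop, no stochastic-approximation or regret argument, and no discussion of realizability, identifiability, or the adaptive querying issue you raise.

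So the comparison is not ``same approach vs.\ different approach'' but ``a worked sketch vs.\ a bare claim.'' Your decomposition into (a) an online-learning step showing $\epsilon_t \to 0$ and (b) an application of Part~2 of Theorem~\ref{thm:convergence_full} is exactly the intended structure, and the paper's version of (a) is simply to cite DPO and its variants and take the conclusion for granted. The two features you highlight --- finitely many critical states giving a low-dimensional target, and near-deterministic targets making per-state identification cheap --- are not mentioned in the paper's treatment of the corollary but are the right reasons the claim is plausible under the paper's own hypotheses. Likewise, the realizability gap you flag (DPO recovers the best policy in the expressible class, not an arbitrary $\pi_i^*$ in $\ell_\infty$) is a genuine issue the paper does not acknowledge; your two proposed fixes --- assume realizability outright, or relax to $\epsilon_t \to \epsilon_\infty$ and require $\epsilon_\infty < \bar{\epsilon}$ --- are both reasonable and either would make the corollary honest. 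In short: your proposal is correct in spirit and strictly more careful than the paper; there is nothing in the paper's own argument for you to align with beyond the one-line ``DPO reduces $\epsilon$, then apply Theorem~\ref{thm:convergence_full}.''
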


\section{MDP Formulation Background}
\label{app:mdp}

We view each interaction as a \emph{multi-agent MDP} $\mathcal{M}=(\mathcal{S}, \mathcal{A}_i, \mathcal{A}_j, T, R, \gamma)$ in which two policies $\pi_i,\pi_j$ act over states $s\in\mathcal{S}$ with transition dynamics $T$ and per-step rewards summarized by a compatibility signal $R$. Compatibility emerges from the returns induced by the joint policies. Since true human policies $\pi_i^*,\pi_j^*$ are unobserved, we approximate them with LLM-driven policies $\hat{\pi}_i,\hat{\pi}_j$ and estimate $R$ from simulated trajectories, following inverse reinforcement learning principles \cite{ng2000algorithms}.

\section{Experimental Details}
\label{app:experimental_details}

\subsection{Datasets}

\textbf{Speed Dating.} 8,378 four-minute dates, 552 participants (2002-2004). Features: demographics, interest ratings (17 activities), self-ratings (attractiveness, sincerity, intelligence, fun, ambition), partner preferences. Outcome: mutual match decision.

\textbf{Divorce Prediction.} 170 couples, 54 Likert-scale questions (1-5) covering conflict resolution, shared values, mutual understanding, and communication patterns \cite{yontem2019divorce}. Outcome: married (1) vs divorced (0).

\subsection{Implementation}

\textbf{Speed Dating:} Gemini 2.5 Flash Lite generates personas (300-500 words) from structured profiles. Mistral-Nemo simulates conversations (temperature=0.6). We evaluate three LLM methods:
\begin{itemize}
    \item \textbf{LLM Participant:} Each agent rates the date from their own perspective after the simulated conversation.
    \item \textbf{LLM Observer:} An external LLM analyzes the conversation transcript and provides a compatibility assessment based on interaction quality, mutual engagement, and conversational flow.
    \item \textbf{LLM Mixed:} Combines participant ratings ($r_1, r_2$) and observer rating ($r_3$) using learned weights $\mathbf{w} = [w_1, w_2, w_3]$ optimized on training data to predict match outcomes.
\end{itemize}
The Observer method can be enhanced with in-context learning (10 examples) for calibration. Baselines: logistic regression and cosine similarity.

\textbf{Divorce Prediction:} Gemini 2.5 Flash Lite generates couple personas emphasizing conflict styles and values. We simulate reactions to personalized critical events (career conflicts, trust breaches, caregiver burdens). The Observer method uses 10 ICL examples to mitigate self-assessment bias—without calibration, LLMs tend to be overly negative about relationship outcomes (especially the participants methods). Baseline: logistic regression on survey features.

\subsection{System Workflow and Prompts}
\label{app:workflow_prompts}

We provide concrete examples of our system's workflow and prompt templates to aid reproducibility and understanding of the LLM text world engine architecture.

\textbf{Workflow Overview:} The system operates in three phases: (1) \textit{Persona Generation} converts structured profile data into natural language narratives, (2) \textit{Interaction Simulation} uses dual-mode LLMs (as agents + environment) to generate conversation trajectories $\tau_{ij}$, and (3) \textit{Compatibility Assessment} extracts ratings from participant and observer perspectives.

\vspace{0.3cm}
\noindent\textbf{Phase 1 - Persona Generation (Speed Dating):}

\begin{tcolorbox}[colback=gray!5, colframe=gray!60, title=Persona Generation Pipeline, fonttitle=\bfseries\small]
\small
\textbf{Input:} Structured profile (age, gender, interests, preferences)

\textbf{Prompt:} ``Generate a 300-500 word persona narrative for a speed dating participant based on: [profile data]. Include: personality traits, life goals, dating preferences, and conversational style.''

\textbf{Output:} Natural language persona ($P_i$)
\end{tcolorbox}

\vspace{0.3cm}
\noindent\textbf{Phase 2 - Conversation Simulation (Speed Dating):}

\begin{tcolorbox}[colback=blue!5, colframe=blue!60, title=Agent Prompt Template, fonttitle=\bfseries\small]
\small
\textbf{System:} ``You are Person A in a speed dating session.\\
\hspace*{2em} Your persona: [P\_i]\\
\hspace*{2em} Reply format:\\
\hspace*{2em} <INNER\_THOUGHT>[private feelings]</INNER\_THOUGHT>\\
\hspace*{2em} <RESPONSE>[what you say]</RESPONSE>''

\textbf{User:} [Partner's previous message from $P_j$'s RESPONSE]

\textbf{Simulation Process:}
\begin{enumerate}
\item Person 1 opens (system prompt + opening instruction)
\item For rounds 1..N:
\begin{itemize}
\item Person 2 responds to Person 1's public RESPONSE
\item Person 1 responds to Person 2's public RESPONSE
\end{itemize}
\item INNER\_THOUGHTs remain private (not shared)
\end{enumerate}
\end{tcolorbox}

\vspace{0.3cm}
\noindent\textbf{Phase 3 - Observer Evaluation (Speed Dating):}

\begin{tcolorbox}[colback=green!5, colframe=green!60, title=Observer Evaluation Prompt, fonttitle=\bfseries\small]
\small
\textbf{Prompt:} ``You are a relationship psychologist. Evaluate this couple based on established theories (Similarity-Attraction, Social Exchange, Attachment Theory).

Person 1: [P\_1 narrative]\\
Person 2: [P\_2 narrative]\\
Conversation: [full transcript of RESPONSEs only]

Rate compatibility (0-10) considering:
\begin{itemize}
\item Shared interests/values
\item Communication quality
\item Mutual attraction signals
\item Long-term potential''
\end{itemize}

\textbf{In-Context Learning (ICL):} 10 calibration examples shown:\\
\hspace*{1em}``Example 1: [P\_1 bg], [P\_2 bg] $\rightarrow$ Match/No Match''\\
\hspace*{1em}... (5 matches, 5 non-matches)
\end{tcolorbox}

\vspace{0.5cm}
\noindent\textbf{Critical Events Workflow (Divorce Prediction):}

The divorce task extends the architecture with a \textit{world engine} that maintains environmental state while agents respond from their personas. This separation ensures agents never "speak for each other"—the engine only describes external circumstances.

\vspace{0.3cm}
\noindent\textbf{Phase 1 - Persona Generation (Divorce):}

\begin{tcolorbox}[colback=gray!5, colframe=gray!60, title=Divorce Persona Generation, fonttitle=\bfseries\small]
\small
\textbf{Input:} 54 Gottman DPS survey responses (0-4 Likert scale: trust, communication, conflict resolution, shared values)

\textbf{Prompt:} ``Generate a 300-500 word persona for a married individual based on survey data. Emphasize: (1) conflict resolution style, (2) trust patterns, (3) core values, (4) communication approach, (5) boundaries and deal-breakers.''

\textbf{Output:} Husband persona ($P_h$) + Wife persona ($P_w$)
\end{tcolorbox}

\vspace{0.3cm}
\noindent\textbf{Phase 2 - World Engine Initialization:}

\begin{tcolorbox}[colback=orange!5, colframe=orange!60, title=World Engine System Prompt, fonttitle=\bfseries\small]
\small
\textbf{Role:} LIFE CIRCUMSTANCES narrator for a married couple

\textbf{Constraints:}
\begin{itemize}
\item Describe realistic scenarios and environmental changes ONLY
\item NEVER speak for husband or wife (no ``he says'' or ``she thinks'')
\item After agents respond, describe how environment/situation evolves
\item Add realistic stakes: time pressure, resource constraints, irreversible consequences
\end{itemize}

\textbf{Initial Scenario Prompt:}\\
``Critical Event: [Generated scenario from ICL, e.g., career conflict]\\
Husband Persona: [$P_h$]\\
Wife Persona: [$P_w$]\\
Set the scene with concrete sensory details (where they are, atmosphere, initial moment).''

\textbf{Example Output:}\\
``It's 9 PM on a Thursday. The dinner table is still messy. Your spouse just got off a two-hour call with their boss, eyes bright with excitement. They turn to you and say they need to talk about 'our future.' The job offer letter sits between you—2000 miles away, starting in 6 weeks.''
\end{tcolorbox}

\vspace{0.3cm}
\noindent\textbf{Phase 3 - Agent Interaction (with Environment Feedback):}

\begin{tcolorbox}[colback=blue!5, colframe=blue!60, title=Agent Response Template (Divorce), fonttitle=\bfseries\small]
\small
\textbf{Agent Prompt (Husband/Wife):}\\
``You are the [husband/wife] in this marriage.\\
Your persona: [$P_h$ or $P_w$]

\textbf{ICL Examples:} {[}5 divorced + 5 married couples' survey responses + outcomes{]}

\textbf{Current Situation:}\\
{[}Latest environment description from world engine{]}\\
{[}Spouse's last public RESPONSE{]}

\textbf{Format:}\\
<INNER\_THOUGHT>\\
{[}Self-check: How does your reaction align with your persona?{]}\\
{[}Your private true feelings—not what you ``should'' think{]}\\
</INNER\_THOUGHT>

<RESPONSE>\\
{[}What you actually say or do{]}\\
</RESPONSE>''

\textbf{Example Agent Output:}\\
<INNER\_THOUGHT>\\
My persona emphasizes independence (Atr12=1). This feels like my life being decided without me. I'm angry but don't want to seem unsupportive. Conflict avoidance (Atr3=2) makes me want to say yes, but my boundary is being crossed.\\
</INNER\_THOUGHT>

<RESPONSE>\\
``I... I'm happy for you, but this is a lot to process. Can we talk about what this means for my career? I've worked five years to get where I am.''\\
</RESPONSE>
\end{tcolorbox}

\vspace{0.3cm}
\noindent\textbf{Phase 4 - Environment Evolution (World Engine Reacts):}

\begin{tcolorbox}[colback=orange!5, colframe=orange!60, title=World Engine Reaction Prompt, fonttitle=\bfseries\small]
\small
\textbf{Input:}\\
- Husband's last RESPONSE: [public action]\\
- Wife's last RESPONSE: [public action]\\
- Couple personas: [$P_h$, $P_w$]\\
- ICL references: [Same 10 example couples to maintain consistency]

\textbf{Prompt:}\\
``Never speak for the husband or wife. Evolve the environment and situational pressures only. How does the atmosphere shift? What external changes occur? (2-3 sentences)''

\textbf{Example Output:}\\
``The silence stretches. Your spouse's excitement dims—they look away, jaw tight. The job offer deadline looms: 48 hours to decide. Outside, rain starts tapping the window. The unspoken question hangs: whose dream matters more?''
\end{tcolorbox}

\vspace{0.3cm}
\noindent\textbf{Phase 5 - Observer Evaluation (with ICL):}

\begin{tcolorbox}[colback=green!5, colframe=green!60, title=Observer Evaluation with Survey Calibration, fonttitle=\bfseries\small]
\small
\textbf{Prompt:} ``You are a relationship psychologist. You have 10 reference couples' survey responses (27 questions, 0-4 scale) with known outcomes.

\textbf{Reference Couples:}\\
{[}5 divorced (score 6.0-9.0): Low trust/communication/shared values{]}\\
{[}5 married (score 1.0-4.5): High trust/communication/shared values{]}

\textbf{Target Couple:}\\
Survey: [54 Gottman DPS responses in Q\&A format]\\
Interactions: [Public RESPONSEs from 6 rounds across 3 critical events]

\textbf{Task:} Compare target's survey + behavior with references. Assign divorce likelihood (0.0-10.0).

\textbf{Output:}\\
<ANALYSIS>\\
{[}Which references match? Key risk/stability signals?{]}\\
</ANALYSIS>

<SCORE>x.x</SCORE>''
\end{tcolorbox}

\vspace{0.3cm}
\noindent\textbf{Key Architectural Principles:}

\begin{itemize}
\item \textbf{Separation of Concerns:} World engine manages environment state; agents manage internal policies
\item \textbf{Private vs. Public:} INNER\_THOUGHT (persona-driven, private) vs. RESPONSE (observable by spouse)
\item \textbf{ICL Consistency:} Same 10 reference couples used across persona generation, agent prompts, environment evolution, and observer evaluation to maintain coherent simulation
\item \textbf{No Mind Reading:} Engine never assumes agent thoughts; agents never see spouse's inner thoughts
\end{itemize}

This architecture enables stress-testing relationships through pivotal moments while preserving persona authenticity and interpretability.

\end{document}